\pdfoutput=1
\documentclass[conference,9pt]{IEEEtran}
\IEEEoverridecommandlockouts

\usepackage[T1]{fontenc}
\usepackage[utf8]{inputenc}
\usepackage{microtype}
\usepackage{amsmath,amssymb}
\usepackage{booktabs,tabularx,threeparttable,array}
\usepackage{graphicx}
\usepackage{xcolor}
\usepackage{listings}
\usepackage{tikz}
\usetikzlibrary{positioning,arrows.meta,calc}
\usepackage{balance}
\usepackage[numbers,sort&compress]{natbib}
\usepackage[hidelinks,breaklinks]{hyperref}
\hypersetup{pdftitle={Privacy-Friendly Cohort Determination: Sealed, CSP-Independent In-Browser ML Inference of Professional Segments for Identity-Less Advertising},pdfauthor={Om Shankar Tiwari, Navnit Shukla, Guanyu Wang, Akshay Jain}}
\usepackage{url}

\newcommand{\sys}{SIF}
\newcommand{\krr}{$k$-RR}
\newcommand{\lifatid}{\texttt{li\_fat\_id}}
\newcommand{\sharedid}{SharedID}
\newcommand{\Ltilde}{\tilde{L}}

\makeatletter
\newcommand{\linebreakand}{\end{@IEEEauthorhalign}\hfill\mbox{}\par\mbox{}\hfill\begin{@IEEEauthorhalign}}
\makeatother
\newtheorem{proposition}{Proposition}
\newtheorem{definition}{Definition}

\title{Privacy-Friendly Cohort Determination: Sealed, CSP-Independent In-Browser ML Inference of Professional Segments for Identity-Less Advertising}
\author{\IEEEauthorblockN{Om Shankar Tiwari}
\IEEEauthorblockA{Applied AI Technical Lead, Google\\\url{https://orcid.org/0009-0001-2247-0383}}
\and
\IEEEauthorblockN{Navnit Shukla}
\IEEEauthorblockA{Principal AI Architect, Snowflake Inc.\\\url{https://orcid.org/0009-0005-8801-3344}}
\linebreakand
\IEEEauthorblockN{Guanyu Wang}
\IEEEauthorblockA{Staff Software Engineer, TikTok}
\and
\IEEEauthorblockN{Akshay Jain}
\IEEEauthorblockA{Staff Software Engineer, LinkedIn}
\thanks{This work was carried out independently of the authors' employers and uses only public sources. The views are the authors' own.}}

\begin{document}
\maketitle

\begin{abstract}
B2B advertising targets a viewer's professional attributes (employer size and industry, function, seniority) and has obtained them by matching identities across sites. Safari and Firefox block third-party cookies, Google retired the Privacy Sandbox cohort APIs in 2025, and reverse-IP firmographics decay under remote work. We present \sys{} (Sealed Inference Frame), which infers coarse professional cohorts on the device and emits only a locally differentially private, taxonomy-coded label into the OpenRTB bid stream, with no cross-site identifier. It rests on a property of the web platform we make precise: a navigated cross-origin iframe is the only way third-party code obtains a policy it controls, so inference runs in WebAssembly even where the publisher's CSP forbids it, and a nested worker served with \texttt{default-src 'none'} gives the model no network. Even a malicious model leaks at most about 5 bits per site per week. Labels pass through a memoised $k$-ary randomised response keyed to the publisher's first-party identifier, which gives $\varepsilon$-local differential privacy, defeats averaging, and links requests no better than the identifier already sent. An org-conditional $k$-anonymity rule suppresses cells, more strictly on corporate networks than at home. Cohorts ride OpenRTB \texttt{user.data} in a LinkedIn-aligned taxonomy, and attribution uses LinkedIn's click-scoped \lifatid{} without bridging identities. We report a crawl of CSP deployment on 7,969 top sites and 431 B2B publishers, Heavy-Ad budgets, closed-form privacy--utility trade-offs, a re-identification simulation, and an assessment of which attributes are predictable at all: company type and size are, seniority largely is not. On-device is a design property, not a consent exemption.
\end{abstract}

\begin{IEEEkeywords}
online advertising, on-device machine learning, local differential privacy, $k$-anonymity, WebAssembly, Content Security Policy, real-time bidding
\end{IEEEkeywords}

\section{Introduction}
\label{sec:intro}

Business-to-business advertising wants to know something unusual. A consumer campaign asks whether a viewer likes running shoes. A B2B campaign asks whether the viewer is a director of engineering at a 5,000-person software company. The attributes that carry value (employer size, industry, job function, seniority) are stable and sparse, and for most of the web's history they were obtained by matching identity: a cookie seen on a publisher was joined to a profile held by a platform that knows the viewer's job. That join made the market work, and it is disappearing.

Three developments in 2025--2026 changed the setting. Chrome kept third-party cookies~\cite{google2025nextsteps}, but Safari~\cite{webkit2020cookies} and Firefox~\cite{mozilla2022tcp} block them by default, so identity matching is blind on a large, demographically skewed share of traffic. Google retired the browser-native alternatives (Topics, Protected Audience, Attribution Reporting, Private Aggregation, IP Protection) in October 2025~\cite{google2025retire,chromium2025topicsremove,chromestatus-pa-remove}, after academic work showed the cohort primitives were linkable~\cite{berke2022floc,jha2023topics,beugin2024interest,long2024protected} and adoption stayed low~\cite{grib2026rise,beugin2026lessons}. The industry's identity-free format, IAB Tech Lab's Seller Defined Audiences (SDA), carries self-attested labels with no verification, and a 2023 audit found 75\% of SDA signals failed the specification~\cite{iab2023sdaguide}. The B2B fallback, reverse-IP firmographics, decays under remote work: vendors themselves report that IP-only attribution identifies 5--20\% of visitors on corporate networks and none at home~\cite{demandbase2020wfh}.

This paper grew out of the first author's earlier work on privacy for LinkedIn's advertising products, where one question kept returning: what can B2B targeting keep once identity matching is gone? We ask whether the attributes buyers need can be inferred on the device, from signals a browser legitimately exposes, and released in a form that carries no identity and a bounded amount of information, and whether an ad-tech vendor could deploy such a system on third-party publishers in 2026 with no browser vendor's help. We answer with a system, \sys{} (Sealed Inference Frame), and an assessment of what it can and cannot predict. We contribute: (i) a policy-independent, zero-egress execution architecture (Section~\ref{sec:sif}), resting on the fact that a navigated cross-origin iframe is the only way third-party script obtains a policy container it controls, with a proof that even a malicious model can push at most about 5 bits per site per week through the emitted label; (ii) org-conditional inference with suppression (Section~\ref{sec:inference}), stricter on corporate networks than at home, the inverse of reverse-IP vendors; (iii) a memoised randomised response keyed to the publisher's first-party identifier (Section~\ref{sec:privacy}), giving $\varepsilon$-local differential privacy per value, no averaging, and no linking capability beyond the identifier already in the bid request; (iv) identity-free integration with OpenRTB and LinkedIn's click-scoped \lifatid{}, plus an agreement statistic that makes self-declared seller segments auditable (Section~\ref{sec:integration}); and (v) a crawl of CSP deployment against WebAssembly and third-party frames, a re-identification simulation, and a synthesis of what published evidence says each attribute is predictable from (Section~\ref{sec:eval}).

The machine-learning claim is deliberately narrow. Published evidence supports inferring organisation type for essentially all traffic and a named organisation for a minority of enterprise sessions, but not seniority from device signals at useful accuracy. We treat seniority and function as hypotheses, and argue that a weak classifier is, for privacy, a feature. Finally, ``on-device'' is a design property, not a legal category: releasing any derivation of device-read information requires consent even when the computation is local~\cite{edpb2023guidelines,pecr-reg6}. \sys{} is consent-gated and makes the consented processing minimal and verifiable (Section~\ref{sec:legal}).

\section{Background and Related Work}
\label{sec:background}

\textbf{Client-side private advertising} is an old idea~\cite{juels2001targeted}. Adnostic~\cite{toubiana2010adnostic}, RePriv~\cite{fredrikson2011repriv} and their successors (surveyed by Ullah et al.~\cite{ullah2020survey}) keep the profile on the client, but all target consumer interests. None integrates with the real-time bidding (RTB) protocol through which most display inventory is sold, and none handles attributes such as employer and seniority, which are quasi-identifiers in small populations.

\textbf{Browser cohort APIs.} FLoC's cohort identifier worked as a fingerprinting surface~\cite{rescorla2021floc} and re-identified more than 95\% of users within weeks~\cite{berke2022floc}. Topics replaced it with an on-device classifier, top-5 topics per weekly epoch, a 5\% uniformly random topic and per-caller witness filtering~\cite{topics-explainer}. It is the only shipped on-device classifier with randomised output for advertising and the closest precedent to our mechanism, and stable-interest users were re-identified across its epochs~\cite{thomson2023topics,jha2023topics,beugin2024interest}. Carey et al.\ gave a hypothesis-testing framework for re-identification from compact signals~\cite{carey2023reid}. Protected Audience's reporting channels permit cross-site linkage~\cite{long2024protected}. Google retired both, with removal at Chrome 152 for Topics and 153 for Protected Audience~\cite{google2025retire,chromium2025topicsremove,chromestatus-pa-remove}. For a vendor in 2026 there is no browser-mediated cohort primitive. Whatever runs must run as ordinary web content.

\textbf{Reverse-IP firmographics} map an IP address to an organisation using registry allocations, autonomous-system data and reverse DNS. Vendors' own accounts give the limits: deterministic attribution ``5--20\% of the time when employees are on their corporate networks'' and ``0\%'' at home~\cite{demandbase2020wfh}, company-level identification on 10--40\% of traffic in conservative benchmarks~\cite{leadfeeder-anon}, and higher ``match rates'' only with cookies and identity graphs~\cite{warmly-matchrates}. 94\% of cellular ASes deploy carrier-grade NAT~\cite{richter2016cgn}, shared SASE egress lets thousands of enterprises arrive from one cloud node~\cite{zscaler-transactions}, and RFC~8981 temporary IPv6 addresses randomise only the interface identifier, so enterprise prefix attribution survives~\cite{rfc8981}. AS-to-organisation mapping~\cite{cai2010as2org,caida-as2org} and AS owner classification~\cite{ziv2021asdb} give the feature engineering for organisation type rather than identity, which is the part of the signal we keep.

\textbf{Inferring professional attributes.} Tweet content predicts occupational class at 52.7\% over nine groups against a 34.4\% baseline~\cite{preotiuc2015occupation}, passive phone sensing carries occupational information~\cite{khaokaew2024workr}, and owning an Apple device was the strongest single predictor of top-versus-bottom income quartile in 2016, at 69\%~\cite{bertrand2018coming}. We know of no published evidence that seniority or function can be predicted from device, network and context signals alone.

\textbf{WebAssembly and CSP.} The Web Almanac reports CSP on 19--22\% of pages, \texttt{script-src} in about 17\% of policies, and \texttt{unsafe-eval} in 77--78\% of those~\cite{almanac2024security,almanac2025security}. WebAssembly compilation is gated by \texttt{script-src} unless \texttt{'unsafe-eval'} or \texttt{'wasm-unsafe-eval'} is present~\cite{csp3}. Browser deep-learning inference runs 5--17$\times$ slower than native on CPU, dominated by load and warm-up~\cite{ma2019moving,wang2024anatomizing}. Fingerprinting research quantifies the entropy of the device signals we use~\cite{eckersley2010unique,gomezboix2018hiding,laperdrix2020survey}. What has not been studied is how a third party obtains an execution context with its own policy on a hostile publisher, and what egress guarantees that context can satisfy.

\section{Setting, Threat Model and Goals}
\label{sec:threat}

A user visits a publisher that sells display inventory through header bidding (Prebid.js) into supply-side platforms (SSPs), which forward OpenRTB bid requests to demand-side platforms (DSPs). A vendor operates \sys{}: it ships the model, serves the bridge frame, and computes a coarse IP-context vector. A B2B DSP such as LinkedIn's Audience Network buys cohort-targeted impressions and measures conversions on advertiser sites where its tag is installed (Fig.~\ref{fig:context}). The assets are the device features $x_{\mathrm{dev}}$, the IP-context vector $x_{\mathrm{ip}}$, the model's true output $L$ (one cohort value per axis), the user's cross-site identity, and the vendor's model.

\begin{figure*}[t]
\centering
\begin{tikzpicture}[
  font=\scriptsize,
  box/.style={draw, rounded corners=2pt, align=center, inner sep=4pt, minimum height=0.8cm},
  srv/.style={box, fill=black!4},
  cli/.style={box, fill=blue!4, draw=blue!50!black},
  frame/.style={box, dashed, fill=white},
  flow/.style={-{Stealth[length=1.8mm]}, thick},
  never/.style={-{Stealth[length=1.8mm]}, thick, draw=red!70!black, dashed},
  lbl/.style={font=\scriptsize, align=center, fill=white, inner sep=1.5pt},
  warn/.style={font=\scriptsize, align=center, inner sep=1.5pt, text=red!70!black}
]
% vendor above T1
\node[srv, text width=3.4cm] (vendor) at (5.4,3.4) {\textbf{Vendor origin}\\ model CDN (hash-pinned)\\ IP-context service (public registry data)};
% browser on the publisher site
\node[cli, text width=7.2cm, minimum height=2.9cm, anchor=north west] (browser) at (0,1.5) {};
\node[anchor=north west, font=\scriptsize\bfseries] at (0.1,1.45) {Browser, top-level site = publisher};
\node[box, text width=2.9cm, anchor=north west, fill=white] (t0) at (0.25,0.95) {\textbf{T0} publisher page\\ Prebid RTD submodule\\ consent, \sharedid{}, context};
\node[frame, text width=3.3cm, anchor=north west] (t1) at (3.6,0.95) {\textbf{T1} vendor iframe\\ device signals, coin secret,\\ memo table, \krr{}\\[1pt] \textbf{T2} sealed worker (Wasm)};
\draw[flow] (t1.west) -- (t0.east);
\node[lbl, fill=blue!4] at (3.55,-0.95) {T1 $\to$ T0: the noised label only (\texttt{postMessage})};
% advertiser site
\node[cli, text width=4.6cm, minimum height=1.6cm, anchor=north west] (adv) at (8.6,1.5) {};
\node[anchor=north west, font=\scriptsize\bfseries] at (8.7,1.45) {Browser, top-level site = advertiser};
\node[box, text width=4.1cm, anchor=north west, fill=white] (tag) at (8.85,0.95) {Insight Tag: stores the \lifatid{} cookie,\\ runs \sys{} a second time};
% exchange row below
\node[srv, text width=2.4cm] (ssp) at (3.5,-3.9) {\textbf{SSP}\\ header-bidding exchange};
\node[srv, text width=3.8cm] (dsp) at (9.9,-3.9) {\textbf{B2B DSP}\\ bids on the cohort; attribution\\ $\lifatid{} \mapsto$ bid-time cohort;\\ reports with floors ($\ge 30$/segment)};
% flows
\draw[flow] (vendor.south) -- node[lbl, right] {model bytes, $x_{\mathrm{ip}}$ inline} (t1.north);
\draw[flow] (1.7,-1.4) -- (1.7,-3.9) -- (ssp.west);
\node[lbl, text width=3.1cm, anchor=west] at (1.85,-2.45) {OpenRTB bid request: \texttt{user.data} cohort + noise metadata; no \texttt{user.eids}; \sharedid{} only if the publisher already sends it};
\draw[flow, <->] (ssp.east) -- node[lbl, above] {bid request / bid} (dsp.west);
\draw[flow] ([xshift=-1.0cm]dsp.north) -- node[lbl, left] {click carries\\ \lifatid{}} ([xshift=-1.0cm]adv.south);
\draw[flow] ([xshift=1.0cm]adv.south) -- node[lbl, right] {conversion +\\ \lifatid{}} ([xshift=1.0cm]dsp.north);
% never
\draw[never] (t1.south) -- ++(0,-0.4);
\node[warn, text width=2.0cm, anchor=north] at (6.3,-1.6) {$x_{\mathrm{dev}}$, the fused $(x_{\mathrm{dev}},x_{\mathrm{ip}})$ and the model's raw output never leave T1/T2};
\draw[never, {Stealth[length=1.8mm]}-{Stealth[length=1.8mm]}] ($(browser.north east)+(0,-1.0)$) -- node[warn, above=1pt, font=\tiny] {never joined} ($(adv.north west)+(0,-1.0)$);
\node[warn, text width=3.9cm, anchor=west] at (13.5,0.7) {\sharedid{} (publisher site) and \lifatid{} (advertiser site) live in different storage partitions and are never joined};
\end{tikzpicture}
\caption{Parties and data flows. The only tier with network access to the vendor is T1, which receives model bytes and the coarse IP-context vector and emits a noised label to T0. The bid stream carries the cohort as OpenRTB \texttt{user.data} alongside whatever the publisher already sends. Attribution on the demand side uses the click-scoped \lifatid{} the DSP already has, and cohort-level counts are released under the platform's existing floors.}
\label{fig:context}
\end{figure*}

\textbf{Adversaries.} (A1) An honest-but-curious vendor, and its stronger form, a vendor whose weekly, opaque model updates turn malicious after publishers have integrated the SDK. (A2) SSPs, DSPs and any bid-stream observer, who see the label together with everything OpenRTB already carries: the IP (often full, sometimes truncated to /24 or /48~\cite{google-ab-openrtb}), user agent or UA hints, the publisher's first-party identifier if present, and page context. (A3) Colluding sites linking a user through the label. (A4) Users or bots spoofing features. (A5) A model thief. We do not defend against a malicious publisher, who already controls the page, or against a browser vendor.

\textbf{Goals.} (G1) \emph{No cross-site identifier}: nothing stable across top-level sites is created, stored or sent. (G2) \emph{Raw-signal confinement}: $x_{\mathrm{dev}}$ and the fused $(x_{\mathrm{dev}},x_{\mathrm{ip}})$ never leave the device. (G3) \emph{Bounded release}: each emitted value $\Ltilde$ is $\varepsilon$-locally differentially private with respect to $L$, and cells whose expected occupancy within the resolved organisation is below $k$ are suppressed. (G4) \emph{No incremental linkability}: conditioned on the publisher's identifier already in the request, the label gives A2 no additional ability to link two requests, and across sites its linkage advantage is bounded in closed form. (G5) \emph{Bounded covert channel}: a malicious model leaks at most $B$ bits per site per epoch. (G6) \emph{Small, pinned trusted base}: all code with network access is small, rarely changes and is integrity-pinned by the publisher. Non-goals are hiding the IP from web servers (IP Protection was retired~\cite{google2025retire}), operating without consent where consent is required (Section~\ref{sec:legal}), and protecting the model's confidentiality, which no browser mechanism can~\cite{sun2021mind}.

\section{The Sealed Inference Frame}
\label{sec:sif}

\begin{figure*}[t]
\centering
\begin{tikzpicture}[
  font=\small,
  tier/.style={draw, rounded corners=2pt, align=left, inner sep=5pt, text width=4.8cm, minimum height=4.85cm, anchor=north west},
  arr/.style={-{Stealth[length=2mm]}, thick}
]
\newcommand{\thdr}[1]{{\small\bfseries #1}}
\newcommand{\tsub}[1]{{\scriptsize\color{black!60}#1}}
\newcommand{\tbody}[1]{{\scriptsize #1}}
\newcommand{\twarn}[1]{{\scriptsize\color{red!60!black}#1}}
\node[tier] (t0) at (0,0) {%
  \thdr{T0 \,$\cdot$\, Publisher page}\\[1pt]
  \tsub{origin: publisher $\cdot$ policy: publisher's CSP}\\[3pt]
  \tbody{Prebid RTD submodule, SRI-pinned}\\
  \tbody{reads consent (TCF), \sharedid{}, page context,}\\
  \tbody{low-entropy hints; mounts T1}\\[3pt]
  \tbody{writes \texttt{user.data[]} segments into the}\\
  \tbody{OpenRTB request (via \texttt{ortb2Fragments})}\\[3pt]
  \twarn{no Wasm here (\texttt{script-src})}\\
  \twarn{no vendor fetch (\texttt{connect-src})}\\[3pt]
  \tsub{$\rightarrow$ bid stream (SSP $\rightarrow$ DSP)}
};
\node[tier, draw=blue!50!black, thick] (t1) at (6.05,0) {%
  \thdr{T1 \,$\cdot$\, Bridge iframe}\\[1pt]
  \tsub{origin: vendor $\cdot$ policy: vendor's own headers}\\[3pt]
  \tbody{\texttt{script-src 'self'}}\\
  \tbody{\texttt{\ \ 'wasm-unsafe-eval'}}\\
  \tbody{\texttt{connect-src 'self'} (model, IP context)}\\
  \tbody{\texttt{worker-src 'self'}}\\[3pt]
  \tbody{partitioned storage: coin secret $k_s$, memo}\\
  \tbody{table, cached model; reads device signals}\\[3pt]
  \tbody{applies memoised \krr{} noise}\\
  \tbody{validates T2 output against schema}\\[3pt]
  \tsub{small, auditable, rarely changes;}\\
  \tsub{the only tier with network access}
};
\node[tier, dashed] (t2) at (12.1,0) {%
  \thdr{T2 \,$\cdot$\, Sealed worker}\\[1pt]
  \tsub{script from vendor $\cdot$ policy from its response}\\[3pt]
  \tbody{\texttt{default-src 'none';}}\\
  \tbody{\texttt{script-src 'wasm-unsafe-eval'}}\\[3pt]
  \tbody{Wasm model + feature fusion}\\
  \tbody{inputs: features, IP context, model bytes}\\
  \tbody{(\texttt{postMessage} only)}\\[3pt]
  \tbody{output: argmax per axis + confidence bucket,}\\
  \tbody{once per epoch}\\[3pt]
  \twarn{no fetch, no import, no navigation}\\
  \tsub{opaque, frequently updated}
};
% arrows at fixed depths below the (aligned) top edges, so they stay horizontal whatever the box heights
\draw[arr] ($(t0.north east)+(0,-1.8)$) -- node[above, font=\scriptsize, text=black!60] {ctx} ($(t1.north west)+(0,-1.8)$);
\draw[arr] ($(t1.north west)+(0,-3.0)$) -- node[below, font=\scriptsize, text=black!60] {label} ($(t0.north east)+(0,-3.0)$);
\draw[arr] ($(t1.north east)+(0,-1.8)$) -- node[above, font=\scriptsize, text=black!60] {feat} ($(t2.north west)+(0,-1.8)$);
\draw[arr] ($(t2.north west)+(0,-3.0)$) -- node[below, font=\scriptsize, text=black!60] {$\hat{y}$} ($(t1.north east)+(0,-3.0)$);
\end{tikzpicture}
\caption{The three tiers of \sys{}. Egress exists only from T0 (the bid request) and T1 (model and IP-context fetch to the vendor origin). T2 has none by construction. The noise is applied in T1, not in the model.}
\label{fig:arch}
\end{figure*}

\sys{} splits the vendor's code into three tiers (Fig.~\ref{fig:arch}) so that the tier that changes often has no network, and the tiers with network are small enough to audit and pin. \textbf{T0, the collector,} is a Prebid.js Real-Time Data submodule~\cite{prebid-rtd} in the publisher page, plain JavaScript pinned through the publisher's Prebid build and Subresource Integrity. It reads the consent state, the publisher's \sharedid{}~\cite{prebid-sharedid} if present, page context and low-entropy user-agent hints, mounts T1, and once per auction merges a validated set of cohort segments into \texttt{ortb2Fragments.global.user.data[]}. \textbf{T1, the bridge,} is a cross-origin iframe on the vendor's origin, served with the vendor's own headers (\texttt{default-src 'none'; script-src 'self' 'wasm-unsafe-eval'; connect-src 'self'; worker-src 'self'; frame-ancestors *}, and a partitioned cookie). It fetches the hash-pinned model into the partitioned Cache API, receives the IP-context vector inline in its own document response, reads the device signals a cross-origin frame can read, holds the per-partition coin secret and memo table, spawns T2, applies the randomised response to T2's output, validates the result against a whitelist of segment identifiers, and forwards only the noised label to T0 by \texttt{postMessage}. \textbf{T2, the sealed worker,} is a dedicated worker whose script is served with \texttt{default-src 'none'; script-src 'wasm-unsafe-eval'}. It compiles the model, fuses features, and returns one arg-max value per axis with a coarse confidence bucket. Inputs arrive as transferred \texttt{ArrayBuffer}s and its only output channel is \texttt{postMessage} back to T1.

\subsection{Why a navigated cross-origin iframe is necessary}
\label{sec:necessity}

\begin{figure}[t]
\centering
\begin{tikzpicture}[
  font=\scriptsize,
  row/.style={anchor=north west, align=left, minimum height=1.0cm},
  ctx/.style={row, draw, rounded corners=1.5pt, inner sep=3pt, text width=3.15cm},
  pub/.style={ctx, fill=black!4},
  ven/.style={ctx, draw=blue!50!black, thick, fill=blue!3},
  seal/.style={ctx, dashed, fill=white},
  pol/.style={row, inner sep=2pt, text width=4.7cm},
  tree/.style={black!50, semithick},
  edge/.style={-{Stealth[length=1.4mm]}, black!50, semithick}
]
\newcommand{\ppub}{\textcolor{red!70!black}{\textbf{publisher's policy}}}
\newcommand{\pven}[1]{\textcolor{blue!50!black}{\textbf{#1}}}
\node[draw, rounded corners=1.5pt, fill=black!4, anchor=north west, align=left, inner sep=3pt, text width=8.15cm] (root) at (0,0)
  {\textbf{Publisher document}\hfill policy container: \textbf{the publisher's CSP}\\[-1pt]
   {\tiny \texttt{script-src}, \texttt{connect-src}, \texttt{worker-src}, \texttt{frame-src}, nonces, \texttt{'strict-dynamic'}}};
% rows: what third-party script in that document can open, and where each context's policy comes from
\node[pub] (r1) at (0.45,-0.85) {injected \texttt{<script>}, \texttt{eval},\\ \texttt{WebAssembly.compile} in the page};
\node[pol] (p1) at (3.85,-0.85) {\ppub{} (same context): Wasm compiles only if \texttt{script-src} allows \texttt{'unsafe-eval'} or \texttt{'wasm-unsafe-eval'}};
\node[pub] (r2) at (0.45,-2.0) {\texttt{<iframe srcdoc>}, \texttt{about:blank},\\ \texttt{blob:} or \texttt{data:} frame};
\node[pol] (p2) at (3.85,-2.0) {\ppub{}, cloned from the parent or the initiator (local-scheme navigation)};
\node[pub] (r3) at (0.45,-3.15) {worker created from the page\\ (\texttt{blob:}, same-origin or module)};
\node[pol] (p3) at (3.85,-3.15) {\ppub{} or its gate: \texttt{blob:} clones it, a classic worker must be same-origin, a module worker's fetch is gated by \texttt{worker-src}};
\node[ven] (r4) at (0.45,-4.3) {\texttt{<iframe>} navigated to\\ the vendor's origin \hfill \textbf{T1}};
\node[pol] (p4) at (3.85,-4.3) {\pven{vendor's policy}, taken from the vendor's response headers; the publisher's only lever is \texttt{frame-src}};
\node[seal, text width=2.65cm] (r5) at (0.95,-5.45) {\texttt{new Worker()} in T1\\ vendor script \hfill \textbf{T2}};
\node[pol] (p5) at (3.85,-5.45) {\pven{its own response}: \texttt{default-src 'none'} leaves no fetch, import or navigation (sealed)};
% tree lines
\draw[tree] (0.2,-0.62) -- (0.2,0 |- r4.west);
\foreach \r in {r1,r2,r3,r4} { \draw[edge] (0.2,0 |- \r.west) -- (\r.west); }
\draw[tree] (0.7,0 |- r4.south) -- (0.7,0 |- r5.west);
\draw[edge] (0.7,0 |- r5.west) -- (r5.west);
\end{tikzpicture}
\caption{Who authors the CSP of each execution context that third-party script can open from a publisher page, under HTML's policy-container rules~\cite{html-policy-container,html-workers}. Every local-scheme frame and every worker created from the page keeps the publisher's policy or is gated by it. Only a navigated cross-origin iframe takes its policy from the third party's own response (Proposition~\ref{prop:iframe}); a worker created inside that frame takes its policy from its own script response, which is how T2 is sealed.}
\label{fig:policy}
\end{figure}

SDK authors know as folklore that ``the iframe trick works where Wasm is blocked''. It has a precise form (Fig.~\ref{fig:policy}).

\begin{proposition}[Policy independence]
\label{prop:iframe}
Under HTML and CSP Level 3, third-party script in a publisher document can obtain an execution context whose Content Security Policy is authored by the third party if and only if it navigates a child navigable to a non-local URL on an origin it controls.
\end{proposition}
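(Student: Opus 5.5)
The plan is to prove both directions by a case analysis over every way, under HTML and CSP Level 3, that script running in the publisher document can obtain a new execution context. For each context I identify where its policy container comes from. The enumeration follows Fig.~\ref{fig:policy}. The contexts are (a) the publisher document itself, reached through injected \texttt{<script>}, \texttt{eval}, \texttt{Function} or \texttt{WebAssembly.compile}; (b) child navigables whose document has a local URL (\texttt{about:blank}, \texttt{about:srcdoc}, \texttt{blob:}, \texttt{data:}); (c) workers created from the page (dedicated, shared, classic or module, including \texttt{blob:} and \texttt{data:} scripts); (d) child navigables navigated to a non-local URL; and (e) top-level navigables opened via \texttt{window.open} or a link. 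Worklets and service workers are handled with (c).

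For the \emph{if} direction I use the HTML rule for creating a document from a non-local navigation response: the new document's policy container is built from the response's \texttt{Content-Security-Policy} headers. If the origin is controlled by the third party, the third party authors those headers. The publisher's only influence is \texttt{frame-src}/\texttt{child-src}, which decides whether the navigation happens at all but never supplies the resulting policy. I will also note that CSP Embedded Enforcement (the \texttt{csp} attribute) is a draft outside Level 3, and that even under it the embedder can only require that the embedded document's policy be at least as strict as one it names. It cannot author that policy. I state this as a caveat rather than assuming it away.

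For the \emph{only if} direction I go through the remaining cases.
\begin{itemize}
\item Case (a) runs in the publisher's own policy container.
\item Case (b): HTML's rules for local schemes clone the policy container. \texttt{about:srcdoc} clones from the parent, while \texttt{about:blank}, \texttt{blob:} and \texttt{data:} clone from the initiator or creator, and the initiator here is the publisher document. So the policy is the publisher's.
\item Case (c): a \texttt{blob:} or \texttt{data:} worker inherits its creator's policy container. A classic worker from a URL must be same-origin with the creator, so its script response is served by the publisher. A cross-origin module worker's fetch is gated by the creator's \texttt{worker-src}, and its response policy still comes from the publisher's origin under the same-origin restriction on worker script URLs. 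Service workers and worklets are same-origin or inherit the creator's policy in the same way.
\item Case (e) navigates a top-level navigable. Its document is not a context the third-party script in the publisher document keeps access to, except through an opener reference; and if \texttt{window.open} targets a non-local URL, that is itself a navigation of a navigable (not a child one) to an origin the third party controls. I will handle this by arguing that the statement concerns embedded execution on the publisher page. Alternatively I can weaken ``child navigable'' to ``navigable'' in the proof and flag the difference.
\end{itemize}

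The main obstacle is making sure the enumeration is exhaustive and tracking the spec's inheritance rules exactly. The points that need care are the \texttt{data:} URL edge cases (workers and iframes treat them as opaque origins but still inherit the policy), the initiator-versus-parent distinction for \texttt{about:blank}, and whether any other mechanism mints a policy container from third-party bytes. I will address this by quoting the single HTML algorithm that creates policy containers, which derives a container from a response only in the navigation and worker-fetch paths. I will then show that the worker path is restricted to same-origin responses, so the only route to a third-party-authored policy is non-local navigation.
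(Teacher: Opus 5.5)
Your proposal is correct and is essentially the paper's own argument: it uses the same enumeration as Fig.~\ref{fig:policy}, closes the local-scheme cases with HTML's \emph{determine navigation params policy container} (clone from the parent or initiator, and the response's container only for network navigations), and closes the worker cases with the same-origin rule for worker scripts. Two of your refinements go beyond the sketch and are worth keeping: first, module workers are excluded because every top-level worker script is fetched in \texttt{same-origin} mode, whereas \texttt{worker-src} only gates the load, just as \texttt{frame-src} does, so drop the phrase ``cross-origin module worker''; second, \texttt{window.open} to a vendor URL really does yield a vendor-authored policy outside any child navigable, so state the restriction to embedded contexts as an explicit hypothesis (or widen to ``navigable'') rather than argue it away.
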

\begin{IEEEproof}[Proof sketch]
HTML's ``determine navigation params policy container'' clones the parent's or initiator's policy container for \texttt{about:srcdoc} and every local-scheme URL and takes the response's policy container only for a network navigation~\cite{html-policy-container}, while classic workers must be same-origin and module workers are governed by the publisher's \texttt{worker-src} chain~\cite{html-workers,csp3}. Full proof in the ancillary \texttt{proofs.md}.
\end{IEEEproof}

The relevant publisher directive is therefore \texttt{frame-src}, the one publishers routinely open because ad creatives render in frames (Section~\ref{sec:crawl} measures how often). WebAssembly is only the most visible consequence. The vendor frame also escapes the publisher's \texttt{connect-src}, nonces and \texttt{'strict-dynamic'}, which is why we call the property policy independence rather than ``Wasm support''.

\subsection{Zero egress by construction}
\label{sec:egress}
CSP's own specification notes that a policy lacking \texttt{default-src} ``cannot mitigate exfiltration''~\cite{csp3}. We ask a narrower question: can a context be built with no network channel at all, so that a model running in it need not be trusted? For a document, no: \texttt{navigate-to} was removed from CSP Level 3, and nothing governs a frame assigning its own \texttt{location}, which is a request. For a dedicated worker, yes. A worker's policy container is initialised from its script's response~\cite{html-workers}, so a worker served with \texttt{default-src 'none'; script-src 'wasm-unsafe-eval'} has no \texttt{fetch}, XMLHttpRequest, WebSocket, EventSource or beacon (all fall back to \texttt{default-src}), no \texttt{importScripts} or dynamic \texttt{import()}, no navigation (no assignable location), and no WebRTC (\texttt{RTCPeerConnection} is not exposed in worker scopes). Every input after the initial load arrives by \texttt{postMessage}. Timing side channels remain, as in any sandbox, but they reach a remote party only through a cooperating networked context, and the only one is the pinned T1. The cost of a worker is that it has no DOM, so screen geometry and similar signals are read in the auditable T1 and passed in. The privacy-critical operation, fusing the feature vector with the IP context and running the model, happens in T2, which has no network.

\subsection{Trusted base and the covert-channel bound}
\label{sec:covert}
T0 and T1 are a few hundred lines that change rarely and can be reviewed and pinned by hash. The model is megabytes, updated weekly, compiled, and \sys{} does not ask anyone to trust it. Its only output is a label that T1 validates against a fixed schema and emits at most once per site per epoch, and because T1 rather than T2 applies the randomised response, the model cannot bypass the noise.

\begin{proposition}[Bounded covert channel]
\label{prop:covert}
Let the taxonomy have axes $a=1..A$ with $C_a$ values, and let T1 emit each axis through \krr{} with keep probability $p_a$, at most once per epoch per site. Then any T2, however malicious, transmits at most
$B=\sum_{a}\big(\log_2 C_a - H(p_a, q_a,\ldots,q_a)\big)$ bits per site per epoch to any observer of the bid stream, where $q_a=(1-p_a)/(C_a-1)$.
\end{proposition}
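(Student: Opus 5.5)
The plan is to model each site-epoch as a single use of a memoryless channel from the model's output to the bid stream, and to bound the information of any T2 strategy by that channel's Shannon capacity. Fix a site and an epoch. By the architecture of Section~\ref{sec:egress}, T2 has no network channel: its only effect on anything a remote party sees is the value $Y=(Y_1,\dots,Y_A)$ it posts to T1. T1 validates $Y$ against the schema, so $Y_a\in\{1,\dots,C_a\}$ (malformed output is dropped or replaced by a fixed default). T1 then applies \krr{} independently per axis and releases $\Ltilde=(\Ltilde_1,\dots,\Ltilde_A)$ at most once per site per epoch. Whatever secret $S$ the malicious model wants to leak, including its inputs and any state, the data-processing inequality applied to the Markov chain $S\to Y\to\Ltilde\to(\text{observer's view})$ gives $I(S;\text{view})\le I(Y;\Ltilde)$. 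Two points need care here. First, the coins of \krr{} are drawn in T1 from the secret $k_s$, which T2 never sees, so conditionally on $Y$ they are independent of $S$. Second, everything else in the bid request is produced by T0 and T1 independently of T2's output. I would state both explicitly.

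Next I bound $I(Y;\Ltilde)$ by $\sum_a I(Y_a;\Ltilde_a)$. The noise is applied independently across axes, so the channel factorises as $P(\Ltilde\mid Y)=\prod_a P(\Ltilde_a\mid Y_a)$. The standard fact for a product of memoryless channels then gives $I(Y;\Ltilde)\le\sum_a I(Y_a;\Ltilde_a)$, even when T2 correlates its outputs across axes. The argument is short: $H(\Ltilde\mid Y)=\sum_a H(\Ltilde_a\mid Y_a)$ and $H(\Ltilde)\le\sum_a H(\Ltilde_a)$.

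The remaining step, and the one with actual content, is computing the capacity of the $C_a$-ary symmetric channel that \krr{} induces. Each row of that channel is a permutation of $(p_a,q_a,\dots,q_a)$, so $H(\Ltilde_a\mid Y_a=y)=H(p_a,q_a,\dots,q_a)$ for every $y$. Hence $I(Y_a;\Ltilde_a)=H(\Ltilde_a)-H(p_a,q_a,\dots,q_a)\le\log_2 C_a-H(p_a,q_a,\dots,q_a)$. Equality holds under a uniform input, which shows the bound is tight. Summing over the axes gives $B$.

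The main obstacle is memoisation and repetition, not the calculation. I must argue that replaying the memoised $\Ltilde$ on every auction within the epoch adds no information, because it is a deterministic function of the same $(Y,k_s)$. The noise must also really be fresh per (site, epoch). Because the coin is keyed to the first-party identifier, T2's outputs across epochs or sites get independent channel uses, and this is why the bound is stated per site per epoch rather than in total. If the memo key depended on $Y$ so that a changed $Y$ drew new coins within an epoch, a fresh coin would count as a new channel use. The "at most once per epoch" clause rules this out, and I would cite it at exactly this point.
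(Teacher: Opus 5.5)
Your single-epoch argument is the paper's proof written out. The paper says only that T1 passes T2's output through a symmetric channel of capacity $\log_2 C_a - H(\text{row})$ once per epoch, and that capacities of independent parallel channels add. Your data-processing step, the product-channel inequality and the row-permutation computation are exactly what that sentence compresses.

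\textbf{The genuine gap: memoisation across epochs.} Your last paragraph handles what you rightly call the main obstacle by misreading the mechanism. The coin is $r=\mathrm{PRF}_{k_s}(\mathit{id}\,\|\,a\,\|\,v)$. It is keyed on T2's own output $v$, and neither $k_s$ nor the \sharedid{} changes at an epoch boundary. So under one identifier the releases in successive epochs are $f_a(v_1),f_a(v_2),\dots$ for a single random map $f_a$, fixed for the identifier's lifetime. That is exactly what yields Proposition~\ref{prop:memo}. These are not independent uses of the \krr{} channel, and per-use capacities do not add over a channel with this memory.

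Here is a counterexample. Take $\varepsilon_a=0$ and $C_a=2$, so $p_a=q_a=\tfrac12$ and the bound for this axis is $0$. Then $f_a(0)$ and $f_a(1)$ are independent fair bits. Suppose T2 outputs $0$ in epoch~1 and its secret bit $b$ in epoch~2. An observer who links the epochs through the \sharedid{} sees equal labels with probability $1$ if $b=0$ and $\tfrac12$ if $b=1$. Hence $I(b;\tilde L^{(1)},\tilde L^{(2)})=\tfrac34\log_2\tfrac43\approx0.31$ bits, all of it delivered in epoch~2. The effect is not confined to $\varepsilon=0$: at $\varepsilon_a=3$, $C_a=5$, i.i.d.\ uniform outputs over two epochs carry about $2.86$ bits, against $2\times1.34$.

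So your argument, like the paper's, proves a one-shot statement. A single epoch's release, taken alone, carries at most $B$ bits, and releases on different sites are independent because they use independent $(k_s,\mathit{id})$. It does not give a budget that accumulates as $B$ per week. That would require either fresh per-epoch coins, which gives up no averaging, or freezing the emitted value per $(\mathit{id},a)$ whatever $v$ does.

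\textbf{Two smaller repairs to your first paragraph.} First, the data-processing step is misstated. The model's inputs are correlated with the user agent, IP and identifier the observer already holds. So $S\to Y\to\tilde L\to\text{view}$ is not a Markov chain, and $I(S;\text{view})\le I(Y;\tilde L)$ is false as written. Bound the increment instead: $I(S;\tilde L\mid R)\le I(Y;\tilde L\mid R)$, with $R$ the rest of the request and the organisation class that fixes $U_a$. This holds because the coin is independent of $S$ given $(Y,R)$.

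Second, do not allow malformed output to be ``dropped''. If anything T2 does (malformed, late or missing output) can cause silence, then emission versus silence is a noiseless symbol under T2's control. With per-axis silence the capacity becomes $\log_2(2^{c_a}+1)$, where $c_a=\log_2 C_a-H(p_a,q_a,\ldots,q_a)$. For seniority at $\varepsilon=3$ that is about $1.82$ bits rather than $1.34$. Every T2 behaviour must be mapped to a valid \krr{} input, so that whether anything is emitted never depends on T2.
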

\begin{IEEEproof}
T1 maps T2's only output through a symmetric channel of capacity $\log_2 C_a - H(\text{row})$ once per epoch, and capacities of independent parallel channels add.
\end{IEEEproof}

At $\varepsilon=3$ per axis, $B \approx 1.24+1.34+1.45+1.26 \approx 5.3$ bits per site per week (Table~\ref{tab:params}). An 18-bit browser fingerprint~\cite{eckersley2010unique} would take more than three weeks to exfiltrate through the label on one site, by which time the memoised coin has held the emitted value constant and partitioned storage has kept the model from seeing the same user elsewhere. No prior client-side advertising system offers a quantified bound of this kind.

\subsection{Budgets and failure modes}
\label{sec:budgets}
Chrome's Heavy Ad Intervention unloads an ad-tagged frame that exceeds 4\,MiB of network bytes (threshold noised upward by up to 1,303\,KiB), 15\,s of CPU in any 30\,s window, or 60\,s of CPU in total, and a vendor frame must assume it is tagged~\cite{chrome-heavyad}. Timers in hidden cross-origin frames are throttled to one wake-up per second or per minute~\cite{chrome-throttling}, and WebAssembly threads need a cross-origin-isolated ancestor chain that fewer than 1\% of pages provide~\cite{almanac2025security}, so T2 runs single-threaded with SIMD. \sys{} therefore budgets the model at $\le 1$\,MiB on the wire, runs inference only while the frame is visible and at background priority, caches the result in T1's partitioned storage, and serves Prebid's auction from the memo table within the RTD module's \texttt{auctionDelay} (typically 50--200\,ms). Storage is partitioned by top-level site in every engine~\cite{chrome-storage-partitioning} and cross-site cookies need the \texttt{Partitioned} attribute~\cite{chips}, so \sys{} is a per-site design by construction, which is also its privacy argument (G1). A publisher whose \texttt{frame-src} excludes the vendor renders the SDK inert. Safari caps script-written storage at seven days, so the coin secret may reset weekly unless a server-set partitioned cookie holds it. Firefox's tracking protection blocks frames from Disconnect-listed domains, so a vendor that behaves like a tracker disappears there. \texttt{getHighEntropyValues()} works in cross-origin frames in Chromium~\cite{ua-client-hints}, but Firefox and Safari expose no hints and clamp or omit several signals, so the model must be evaluated per engine.

\section{Org-Conditional Inference}
\label{sec:inference}

\textbf{Signals.}
\label{sec:signals}
T1 reads what a cross-origin frame can legitimately read: structured user-agent hints~\cite{ua-client-hints}, screen geometry and pixel ratio, logical cores and on Chromium the memory bucket, the WebGL renderer class where exposed, languages, time zone, local hour-of-week, touch support, colour scheme, and per-site session state from its partitioned storage. T0 contributes page context and the consent state. The feature vector is intentionally the one fingerprinting research has characterised~\cite{eckersley2010unique,gomezboix2018hiding,laperdrix2020survey}, and Section~\ref{sec:linkage} explains why using it for a coarse label does not recreate a fingerprint.

\textbf{The IP-context vector.}
\label{sec:ipctx}
The IP address is the one signal that is not on-device, since every web server sees it. \sys{} makes its role explicit and coarse. A server (the vendor's, or the publisher's own) computes from public registry data (RIR delegations, CAIDA AS2Org, PeeringDB, reverse DNS) and an AS-type classifier in the style of ASdb~\cite{cai2010as2org,caida-as2org,ziv2021asdb,ipinfo-company} a vector $x_{\mathrm{ip}} = (\text{egress class},\ \text{size bucket},\ \boldsymbol{\pi}_{\mathrm{ind}},\ \text{confidence})$, where the egress class is one of corporate, residential, cellular, hosting, SASE-shared, relay or VPN, the size bucket is one of LinkedIn's nine staff-count ranges when an organisation resolves and $\bot$ otherwise, and $\boldsymbol{\pi}_{\mathrm{ind}}$ is a prior over top-level industry groups. The vector is delivered inline in T1's document response and never contains the address, a named organisation, or any table derived from member data. The DSP cannot recompute it, because bid requests often carry truncated addresses~\cite{google-ab-openrtb}, so the cohort must be inferred before the request exists.

\textbf{Taxonomy.}
\label{sec:taxonomy}
Cohorts are expressed on four axes that map one-to-one onto the facets a B2B DSP already targets~\cite{linkedin-facets}: seniority $S$ (LinkedIn's ten levels collapsed to five bands: individual contributor, manager, director, VP, executive/owner), function $F$ (26 job functions collapsed to ten groups), organisation size $Z$ (nine \texttt{staffCountRange} buckets collapsed to four: 1--50, 51--500, 501--5,000, 5,001+), and industry $I$ (about twenty top-level groups of LinkedIn's Industry Codes V2). The collapse is not cosmetic: randomised response over the full product of cells destroys utility (Section~\ref{sec:params}), and fine cells are unsafe in small organisations. Titles, which are what buyers type into a campaign, are far too fine to emit, since a title within a resolved organisation is often unique. Every title projects onto an $(S,F)$ cell, and a DSP that targets a title already resolves it that way internally, so \sys{} emits the projection and does not attempt title inference.

\textbf{Model.}
\label{sec:model}
The model factorises the organisation level from the person level,
$p(S,F \mid x) = \sum_{o} p(o \mid x_{\mathrm{ip}})\; p(S,F \mid o, x_{\mathrm{dev}}, x_{\mathrm{ctx}})$,
where $o$ ranges over organisation classes (egress class $\times$ size bucket $\times$ industry group). The org head is largely a lookup on $x_{\mathrm{ip}}$. The person head is a small gradient-boosted ensemble or multilayer perceptron, quantised to 8 bits and under 500\,KB, with per-axis outputs calibrated by temperature scaling or isotonic regression so that the confidence bucket has a meaning a buyer can price. Size and industry come from the org head when an organisation resolves and from the person head's marginal otherwise. When the organisation resolves, its size and industry carry most of the information about seniority and function. When it does not, the person head works from weak device and timing priors and its calibrated confidence should say so.

\textbf{Org-conditional suppression.}
\label{sec:suppression}
Global $k$-anonymity is the wrong guarantee here, because the bid-stream observer sees the label together with an IP that may already resolve the employer. A seniority$\times$function cell populated by thousands of people globally may hold one person inside a 40-employee company.

\begin{definition}[Org-conditional suppression]
\label{def:suppression}
Let $n_{\min}(o)$ be the lower bound of the resolved organisation's size bucket (a large constant when no organisation resolves) and $\pi_a(v \mid o)$ the model's prior for value $v$ of axis $a$ within organisation class $o$. The admissible set of axis $a$ in class $o$ is $U_a(o) = \{ v : n_{\min}(o)\,\pi_a(v \mid o) \ge k \}$. If $|U_a(o)| \le 1$ the axis is not emitted for anyone in $o$. Otherwise it is emitted for everyone in $o$ through randomised response over $U_a(o)$, a true value outside $U_a(o)$ being first replaced by a uniformly random element of $U_a(o)$. The decision depends only on $(o, a)$, never on the person's own value.
\end{definition}

Two details were found by simulation (Section~\ref{sec:linkage}). The admissible set must be decided per organisation class rather than per person, because an emission that depends on the person's own value (``emit $\bot$ if my cell is small'') turns $\bot$ into a disclosure. And the randomised response must range over the admissible set rather than the full value set, because otherwise a flipped coin can land on a value nobody else in the organisation emits, and the noise creates unique records instead of removing them. With $k=30$, the floor LinkedIn applies before reporting any segment of a website's visitors~\cite{linkedin-insights}, the rule means that on the corporate address of a 40-person firm the system emits industry only, while on the same person's home connection it may emit all four axes, because the residential address already maps to a household and the label adds no organisation-level re-identification.

\textbf{Training and model privacy.}
\label{sec:training}
Labels for the person head come from a logged-in professional surface where a member's title maps to seniority and function and the same feature vector is observable on first-party traffic. The org head is trained from public registry data and company web text, where industry classification reaches roughly 88\% on 13 classes~\cite{jagric2024industry}. Because the model ships to every device it is public by construction~\cite{sun2021mind}, so it must not memorise members. Small tabular models memorise little~\cite{carlini2023quantifying}, but rare (organisation, title) combinations are the records at risk, so the recipe trains with DP-SGD~\cite{abadi2016dpsgd} and reports membership inference at low false-positive rates~\cite{carlini2022lira} against the shipped artifact. No member-derived lookup table ever ships client-side.

\section{The Privacy Mechanism}
\label{sec:privacy}

A cohort label emitted in the clear is an attribute disclosure, and one emitted with fresh noise on every request is an averaging exercise. \sys{} emits each axis through a $k$-ary randomised response~\cite{warner1965rr,kairouz2016discrete} whose coin is memoised in the sense of RAPPOR's permanent randomised response~\cite{erlingsson2014rappor} and keyed to the publisher's own first-party identifier. The second choice is the unusual one, and it is what makes the linkability argument clean.

\textbf{Mechanism.}
\label{sec:mechanism}
Fix an axis $a$, its admissible set $U_a$ for the current organisation class ($U_a = V_a$ when no organisation resolves), $C_a = |U_a|$, and budget $\varepsilon_a$. Let $p_a = e^{\varepsilon_a}/(e^{\varepsilon_a} + C_a - 1)$ and $q_a = (1 - p_a)/(C_a - 1)$. T1 holds a random secret $k_s$ in its partitioned storage for the current top-level site $s$ and reads the publisher's first-party identifier $\mathit{id}$ (the \sharedid{}, a per-domain random UUID with no cross-site component~\cite{prebid-sharedid}). For a true value $v$ from T2 (replaced by a uniformly random element of $U_a$ if $v \notin U_a$, using the same memoised coin), T1 computes $r = \mathrm{PRF}_{k_s}(\mathit{id} \,\|\, a \,\|\, v) \in [0,1)$ and emits $\Ltilde = v$ if $r < p_a$, and otherwise the $\lfloor (r - p_a)/q_a \rfloor$-th element of $U_a \setminus \{v\}$ in a fixed order. If the publisher exposes no identifier, the user has opted out of it, or consent for profiling is absent, T1 emits nothing. A user who has declined an identifier has declined cohorts.

\textbf{Properties.} Full proofs are in the ancillary \texttt{proofs.md}.
\begin{proposition}[Local differential privacy per emitted value]
\label{prop:ldp}
Modelling $\mathrm{PRF}_{k_s}$ as a random function, for any true values $v, v' \in V_a$ (admissible or not) and any output $w$, $\Pr[\Ltilde = w \mid v] \le e^{\varepsilon_a} \Pr[\Ltilde = w \mid v']$.
\end{proposition}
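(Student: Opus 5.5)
Model $\mathrm{PRF}_{k_s}$ as a truly random function. For a fixed query $(\mathit{id}, a, v)$, the value $r=\mathrm{PRF}_{k_s}(\mathit{id}\,\|\,a\,\|\,v)$ is then uniform on $[0,1)$. The plan is to compute the exact output distribution of $\Ltilde$ given the true value, and then bound the ratio of two such distributions. Throughout we fix $\mathit{id}$, $a$ and the organisation class, so $U_a$ and $C_a$ are fixed.

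\textbf{Admissible inputs.} First take $v\in U_a$. The keep branch gives $\Pr[\Ltilde=v\mid v]=\Pr[r<p_a]=p_a$. On the flip branch, $r-p_a$ is uniform on $[0,1-p_a)=[0,(C_a-1)q_a)$. Hence the index $\lfloor (r-p_a)/q_a\rfloor$ is uniform on $\{0,\dots,C_a-2\}$, and each $w\in U_a\setminus\{v\}$ is emitted with probability exactly $q_a$. Outputs outside $U_a$ have probability $0$ for every input. So for admissible inputs the mechanism is exactly $k$-ary randomised response. The ratio for any $w\in U_a$ is at most $p_a/q_a=e^{\varepsilon_a}(C_a-1)/(C_a-1)=e^{\varepsilon_a}$, and for $w\notin U_a$ both sides are zero.

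\textbf{Inadmissible inputs.} This is the delicate step. If $v\notin U_a$, T1 first replaces $v$ by a uniformly random $u\in U_a$ "using the same memoised coin". I would formalise this as follows: the replacement $u$ is a deterministic function of a random-function output (for example a separate PRF query, or a disjoint slice of $r$'s bits) that is uniform on $U_a$. The randomised response step then uses a coin independent of $u$. Under the random-function model, queries on distinct inputs are independent, and I would set up the formalisation so that this holds. The output distribution given $v\notin U_a$ is then the mixture
\[
\frac{1}{C_a}\sum_{u\in U_a} M(\cdot\mid u),
\]
where $M$ is the admissible-input distribution above. This is uniform over $U_a$, since each $w$ receives $\tfrac{1}{C_a}(p_a+(C_a-1)q_a)=1/C_a$. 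Every entry of this mixture lies in $[q_a,p_a]$, because $q_a\le 1/C_a\le p_a$. All rows, admissible or not, are therefore distributions on $U_a$ with entries in $[q_a,p_a]$. For any $v,v'$ and $w\in U_a$ this gives
\[
\Pr[\Ltilde=w\mid v]\le p_a=e^{\varepsilon_a}q_a\le e^{\varepsilon_a}\Pr[\Ltilde=w\mid v'].
\]

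\textbf{Expected obstacle.} The main obstacle is not the arithmetic but justifying the independence claim in the inadmissible case. If the literal same $r$ were reused for both the replacement and the flip, the two steps would be correlated. In the worst case such a correlation could make some output deterministic, with probability $0$ or $1$, which would break the bound. I would first try to read "the same memoised coin" as "derived from the same keyed PRF" with domain-separated inputs, and prove independence from the random-function model. Failing that, I would split $r$ into two independent uniform components, its high and low bits in the idealised setting. The real-world statement then follows up to the PRF distinguishing advantage, which the proposition's modelling assumption sets aside.
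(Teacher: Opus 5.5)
Your proof is correct and follows the paper's argument. The admissible row is $p_a$ on $v$ and $q_a$ elsewhere, the inadmissible row is uniform on $U_a$, and your sandwich $q_a \le 1/C_a \le p_a$ is a unified form of the paper's two cross-case checks $p_a C_a \le e^{\varepsilon_a}$ and $1/(q_a C_a) \le e^{\varepsilon_a}$.

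The paper's proof simply asserts uniformity for inadmissible $v$. It therefore leaves implicit the independence of the replacement and flip coins, which you make explicit through domain-separated PRF inputs. Your caution is justified: if the literal same $r$ is reused, for example by taking the replacement as the $\lfloor C_a r\rfloor$-th element of $U_a$ with $C_a=3$ and $\varepsilon_a=0.1$, the last element of $U_a$ is never emitted for an inadmissible $v$. The bound then fails.
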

\begin{IEEEproof}
For admissible $v$ the output is $p_a$ on $v$ and $q_a$ elsewhere with $p_a/q_a = e^{\varepsilon_a}$, for inadmissible $v$ it is uniform, and the cross-case ratios $p_a C_a$ and $1/(q_a C_a)$ are both at most $e^{\varepsilon_a}$.
\end{IEEEproof}

\begin{proposition}[Anonymity floor within an organisation]
\label{prop:floor}
Let an organisation of $n \ge n_{\min}(o)$ employees have within-organisation prior $\pi_a$, and let every employee emit axis $a$ under Definition~\ref{def:suppression} with independent memoised coins. For every admissible $w$, the expected number of employees emitting $w$ is at least $k\, p_a$.
\end{proposition}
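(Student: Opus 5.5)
By linearity of expectation, the expected number of employees emitting $w$ is the sum over employees of the probability that each one emits $w$. Since the coins are independent memoised PRF outputs modelled as a random function (as in Proposition~\ref{prop:ldp}), each employee's output distribution depends only on his or her true value. So I would first fix the output law per true value and then average against the within-organisation prior $\pi_a$.

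The per-employee computation has two cases. If the true value $v$ is admissible and equals $w$, the employee emits $w$ with probability $p_a$. If $v$ is admissible but $v\neq w$, the probability is $q_a$. If $v\notin U_a(o)$, it is first replaced by a uniform element of $U_a(o)$, so the employee emits $w$ with probability $\tfrac{1}{C_a}\,p_a+\tfrac{C_a-1}{C_a}\,q_a=\tfrac{1}{C_a}$. All three probabilities are nonnegative, so only the first term matters for a lower bound. The expected count is therefore at least the expected number of employees whose true value is $w$, times $p_a$, that is $n\,\pi_a(w\mid o)\,p_a$.

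Admissibility of $w$ gives $n_{\min}(o)\,\pi_a(w\mid o)\ge k$ by Definition~\ref{def:suppression}. Together with $n\ge n_{\min}(o)$ this yields $n\,\pi_a(w\mid o)\,p_a\ge k\,p_a$. The case $|U_a(o)|\le1$ is vacuous because nothing is emitted, so no admissible $w$ is ever output.

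The main obstacle is interpreting ``within-organisation prior''. The bound needs the expected number of employees with true value $w$ to be $n\,\pi_a(w\mid o)$. That holds if employees' true values are drawn (or their model outputs distributed) according to the same $\pi_a(\cdot\mid o)$ that the suppression rule uses, i.e.\ the model's prior is calibrated for this organisation. I would state this as the modelling assumption, which is implicit in the phrase ``has within-organisation prior $\pi_a$''.

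If tightness matters, the non-$w$ terms can be kept to get the sharper bound $n\big(\pi_a(w)p_a+(\sum_{u\in U_a\setminus\{w\}}\pi_a(u))q_a+\pi_a(V_a\setminus U_a)/C_a\big)$. The stated bound needs only the first term.
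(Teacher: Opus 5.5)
Your proof is correct and is the paper's own one-line argument written out in full: keep only the employees whose true value is $w$, each of whom emits $w$ with probability $p_a$, and use $n\,\pi_a(w\mid o) \ge n_{\min}(o)\,\pi_a(w\mid o) \ge k$ from admissibility. One small correction: the case $|U_a(o)| = 1$ is not vacuous, since its single admissible $w$ would be emitted by nobody and the bound would fail. That case is instead excluded by the hypothesis that every employee emits axis $a$, which forces $|U_a(o)| \ge 2$.
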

\begin{IEEEproof}
An employee with true value $w$ emits $w$ with probability $p_a$, and $n\,\pi_a(w) \ge k$ by admissibility.
\end{IEEEproof}

\begin{proposition}[No averaging]
\label{prop:memo}
For fixed $(s, \mathit{id}, a, v)$ the emitted value is constant across requests. The privacy loss an observer of site $s$ accumulates under one identifier is at most $\varepsilon_a$ per distinct true value of axis $a$ observed under that identifier, independent of the number of requests or epochs.
\end{proposition}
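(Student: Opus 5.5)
The plan has two steps: first show that the output is a deterministic function of $(s,\mathit{id},a,v)$, then run a composition argument over the distinct true values. The key observation is that once all other inputs are fixed, the mechanism of Section~\ref{sec:mechanism} is a deterministic function of the single coin $r=\mathrm{PRF}_{k_s}(\mathit{id}\,\|\,a\,\|\,v)$. That function depends only on $p_a$, $q_a$ and the fixed ordering of $U_a\setminus\{v\}$.

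\textbf{Step 1: constancy.} The secret $k_s$ persists in T1's partitioned storage for site $s$, and the PRF is a deterministic function of its key and input. Hence every request carrying the same $(s,\mathit{id},a,v)$ recomputes the same $r$ and emits the same $\Ltilde$. The case of an inadmissible $v$ needs a short remark. The replacement element of $U_a$ is drawn with the same memoised coin, so it too is a function of $(s,\mathit{id},a,v)$ and stays constant. The admissible set $U_a$ and the rule ``emit or suppress'' depend only on $(o,a)$ by Definition~\ref{def:suppression}, so they are fixed for a fixed organisation class. A change of class I would treat as a change of the conditioning context, handled by the same argument run per context.

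\textbf{Step 2: accounting.} Fix an observer of site $s$ and the sequence of requests it sees under one identifier. Partition that sequence by the distinct true values $v_1,\dots,v_m$ that axis $a$ takes over time. By Step~1, the whole transcript is a deterministic function of the $m$ coins $r_{v_1},\dots,r_{v_m}$, where each $r_{v_j}$ is used to produce the value emitted under $v_j$. Repeated requests therefore add nothing: by post-processing, the information in $j$-many identical copies equals that in one. Modelling the PRF as a random function, the coins for distinct inputs are independent. The view is then the output of $m$ independent releases, and Proposition~\ref{prop:ldp} shows each is $\varepsilon_a$-LDP with respect to its true value. Basic sequential composition gives a total loss of at most $m\,\varepsilon_a$, that is, $\varepsilon_a$ per distinct true value. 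This bound holds regardless of the number of requests or epochs. Epochs do not re-randomise, because the coin input contains no epoch counter.

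\textbf{Main obstacle.} The main difficulty is stating precisely what the observer does not see, namely which request corresponds to which true value. Composition must be applied to the transcript as a whole rather than to labelled pairs. I would handle this by noting that the transcript is a post-processing of the tuple $(\Ltilde_{v_1},\dots,\Ltilde_{v_m})$ together with the timing of changes in $v$. I would treat the change times as public, or as part of the neighbouring relation, so that they carry no extra loss. If this turns out to be too strong, I would weaken the claim to a per-value guarantee conditional on the change pattern.

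A second subtlety is the degenerate case $C_a\le 1$ or a suppressed axis. There the emission is constant (nothing), so its loss is zero, which is within the bound.
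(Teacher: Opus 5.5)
Your proposal is correct and is the paper's argument made explicit. The paper's proof says only that the output is a deterministic function of $(k_s,\mathit{id},a,v)$, so only a change of $v$ yields a fresh $\varepsilon_a$-LDP release as in RAPPOR; your post-processing and composition steps add precision that this one-line proof leaves implicit, as does your point that the pattern of value changes must be held fixed in the neighbouring relation (two differing labels under one identifier already show that $v$ changed).

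One caution: your aside that a change of organisation class is handled by running the same argument per context fails for the accounting. The coin input $\mathit{id}\,\|\,a\,\|\,v$ contains no class, so two releases of the same $v$ under different admissible sets reuse one coin through two different maps and need not compose; a joint output can even have probability zero under a neighbouring value. You should therefore keep the class fixed, and with it $U_a$, $p_a$ and the order, as the paper tacitly does.
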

\begin{IEEEproof}
The output is a deterministic function of $(k_s, \mathit{id}, a, v)$, so only a change of $v$ yields a fresh $\varepsilon_a$-LDP release, as in RAPPOR~\cite{erlingsson2014rappor}.
\end{IEEEproof}

\begin{proposition}[No incremental linkability]
\label{prop:link}
For an observer of site $s$ deciding whether two requests come from the same device: (i) if the requests carry the same identifier, their labels are identical by construction and add nothing to the identifier, and (ii) if they carry different identifiers, because the identifier was reset or the devices differ, their labels are independent $\varepsilon_a$-LDP releases of the respective true values, as if from different sites. The label cannot bridge an identifier reset.
\end{proposition}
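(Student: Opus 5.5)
The argument models $\mathrm{PRF}_{k_s}$ as a uniformly random function $R$ on inputs $(\mathit{id}, a, v)$, exactly as in Proposition~\ref{prop:ldp}, and fixes the site $s$ and its secret $k_s$. The emitted value on axis $a$ is then a deterministic function $g(R(\mathit{id}\,\|\,a\,\|\,v), v, U_a)$. The admissible set $U_a$ depends only on the organisation class $o$ (Definition~\ref{def:suppression}), and $o$ is computed from $x_{\mathrm{ip}}$, which the observer already effectively holds through the IP in the request. I would condition on the requests, their identifiers and the observer's view of $o$ throughout. The two cases of the statement then correspond to whether the two requests query $R$ at the same identifier or at different ones.

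\textbf{Case (i).} If both requests carry the same $\mathit{id}$ and the same true value $v$, the inputs to $R$ coincide. The labels are therefore equal with probability one, which is Proposition~\ref{prop:memo}. Formally, the joint law of $(\mathit{id}_1,\mathit{id}_2,\Ltilde_1,\Ltilde_2)$ given $\mathit{id}_1=\mathit{id}_2$ equals that of $(\mathit{id}_1,\mathit{id}_1,\Ltilde_1,\Ltilde_1)$. So any linking test that uses the labels can be simulated from the identifiers alone, and the label adds zero advantage.

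A subtlety arises if $v$ changes between epochs under the same identifier. Then the two labels are distinct fresh releases, and this could in principle reveal a change. However, the identifier already links the requests with certainty, so the linking advantage is still nil. I would note that this affects attribute privacy only, which Proposition~\ref{prop:memo} already accounts for.

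\textbf{Case (ii).} If $\mathit{id}_1\neq\mathit{id}_2$, the queries $(\mathit{id}_1,a,v_1)$ and $(\mathit{id}_2,a,v_2)$ are distinct points of a random function. $R$ at distinct points gives independent uniforms, so $\Ltilde_1$ and $\Ltilde_2$ are independent, each with the marginal $k$-RR law of Proposition~\ref{prop:ldp}. This holds whether or not $v_1=v_2$, and it holds even when both come from the same device with the same $k_s$, which is the case of a reset identifier.

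This joint distribution is exactly what one obtains from two sites with independent secrets, because on another site the inputs to the function also differ. The observer's test on the label pair therefore reduces to comparing two independent $\varepsilon_a$-LDP releases. The only signal left is through the true values themselves, and since each release is $\varepsilon_a$-LDP, the likelihood ratio between the hypotheses ``same device'' and ``different device'' is at most $e^{2\varepsilon_a}$ per axis. I would state this as the bound, to be made closed-form later in the paper. This is what ``cannot bridge a reset'' means: $k_s$ persisting across the reset gives no correlation.

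\textbf{Main obstacle.} The hardest step is justifying the independence modelling honestly. It needs $R$ to be random with $k_s$ hidden from the observer, since T1 holds it in partitioned storage. It also requires that a reset of the identifier does not coincide with a change in anything else the observer conditions on. The cleanest route is a hybrid argument: replace the PRF by a truly random function at a cost of the PRF distinguishing advantage, and then apply the exact argument above. A second issue is the replacement of an inadmissible value by a uniform element of $U_a$. Because that replacement uses the same memoised coin, it is also a function of $R$ at the query point, so independence across distinct identifiers still holds.
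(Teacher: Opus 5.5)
Your proposal is correct and takes the paper's route. Case (i) is memoisation (Proposition~\ref{prop:memo}): the label is a deterministic function of $(k_s,\mathit{id},a,v)$. Case (ii) holds because distinct identifiers give distinct PRF inputs and therefore independent coins given the true values. Your additions refine this argument without changing its route: the PRF-to-random-function hybrid, the handling of a changed $v$ and of the inadmissible-value replacement under the same memoised coin, and the $e^{2\varepsilon_a}$ likelihood-ratio bound, which does follow from each release being $\varepsilon_a$-LDP.
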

\begin{IEEEproof}
(i) is Proposition~\ref{prop:memo}. For (ii) the PRF input differs in $\mathit{id}$, so the coins are independent given the true values.
\end{IEEEproof}
The publisher already decides, through its consent flow and cookie lifetimes, how long its identifier lives. The cohort label inherits that decision with no second knob.

\begin{proposition}[Cross-site linkage advantage]
\label{prop:advantage}
Let two sites (or two identifiers) observe independent releases $\Ltilde_1, \Ltilde_2$ of axis $a$, with population prior $\boldsymbol{\pi}$ over $V_a$. The advantage of the equality test ``$\Ltilde_1 = \Ltilde_2$'' in distinguishing same-device from different-device pairs is $(p_a - q_a)^2 \big(1 - \sum_i \pi_i^2\big)$.
\end{proposition}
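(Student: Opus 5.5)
Advantage here means $\Pr[\Ltilde_1=\Ltilde_2 \mid \text{same}] - \Pr[\Ltilde_1=\Ltilde_2 \mid \text{different}]$. I would compute the two collision probabilities separately and subtract. Throughout I work on one axis with $C=C_a$ values, $p=p_a$, $q=q_a$, and $p+(C-1)q=1$. I ignore admissibility, taking $U_a=V_a$ and $\boldsymbol{\pi}$ supported on it, since the inadmissible case only replaces $v$ by a uniform draw first.

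By Proposition~\ref{prop:link}(ii), the two releases are independent given the true values, because the PRF inputs differ. So in both cases the collision probability is a sum, over the pair of true values, of the collision probability of two independent draws from the corresponding rows of the \krr{} matrix.

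\textbf{Row collisions.} Let $M$ be the \krr{} channel matrix, with $M_{vw}=p$ if $w=v$ and $q$ otherwise. I need two row inner products.
\begin{itemize}
\item If $v=v'$: $\sum_w M_{vw}M_{v'w} = p^2+(C-1)q^2$.
\item If $v\ne v'$: $\sum_w M_{vw}M_{v'w} = 2pq+(C-2)q^2$.
\end{itemize}
Their difference is $p^2-2pq+q^2=(p-q)^2$.

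\textbf{Same-device pairs.} The true values coincide, $v=v'\sim\boldsymbol{\pi}$. The collision probability is therefore $\alpha:=p^2+(C-1)q^2$, independent of $\boldsymbol{\pi}$.

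\textbf{Different-device pairs.} The true values are independent draws $v,v'\sim\boldsymbol{\pi}$. They coincide with probability $\sum_i\pi_i^2$, so the collision probability is $\sum_i\pi_i^2\cdot\alpha+\big(1-\sum_i\pi_i^2\big)\beta$, where $\beta:=2pq+(C-2)q^2$.

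\textbf{Subtraction.} The advantage is $\alpha$ minus the different-device collision probability, which is $\big(1-\sum_i\pi_i^2\big)(\alpha-\beta)=(p-q)^2\big(1-\sum_i\pi_i^2\big)$, as claimed.

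\textbf{Main obstacle.} The main risk is definitional rather than computational. I must make explicit that ``advantage'' is the difference in true-positive and false-positive rates of this particular test, and that the different-device model draws the two true values independently from the same prior. If the intended notion were an optimal distinguisher, equality testing would need a separate justification, but the statement fixes the test, so I would simply state that assumption.
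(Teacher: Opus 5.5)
Your proof is correct and is the paper's argument: the same three ingredients (same-device collision probability $p_a^2+(C_a-1)q_a^2$, distinct-value collision probability $2p_aq_a+(C_a-2)q_a^2$, and coincidence probability $\sum_i\pi_i^2$ for different devices), with the advantage taken as the gap between the two collision rates. One small quibble: your aside that inadmissible values ``only'' add a uniform pre-draw is not innocuous, because with independent coins on the two sites a same-device pair with an inadmissible true value collides with probability only $1/C_a$; since you, like the paper, take $U_a=V_a$, the stated formula still stands.
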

\begin{IEEEproof}
Same-device equality has probability $p_a^2 + (C_a - 1) q_a^2$, different-device equality with $v \ne v'$ has $2 p_a q_a + (C_a - 2) q_a^2$, and different devices share a value with probability $\sum_i \pi_i^2$.
\end{IEEEproof}

At $\varepsilon_a = 3$ and a uniform prior over five seniority bands the advantage is $0.628 \times 0.8 \approx 0.50$ (Table~\ref{tab:params}). This is not small, since any consistently released attribute is a linkage bit. The claim is not that the label is unlinkable but that it is no more linkable than an independent LDP release of the attribute, that it adds nothing within an identifier's lifetime, and that the same observer already holds roughly 18 bits of fingerprint entropy in the user agent and IP it is sent~\cite{eckersley2010unique,gomezboix2018hiding}.

\textbf{The weak classifier as a privacy budget.}
\label{sec:weak}
The observer sees a noisy prediction of the attribute, not the attribute. The information carried by one emission is $I(T;\Ltilde)$, computable from the confusion matrix and the \krr{} channel. For a classifier with accuracy $a$ and unstructured errors on $C$ classes, $H(\hat{Y} \mid T) = H_b(a) + (1-a)\log_2 (C-1)$, so a 60\%-accurate five-class seniority model leaks $\log_2 5 - 1.77 \approx 0.55$ bits per emission before randomisation, 0.36 bits at $\varepsilon = 3$ and 0.19 bits at $\varepsilon = 2$. Accuracy alone bounds nothing, because a model wrong in a fixed pattern is a bijection and leaks $\log_2 C$ bits, so the quantity to report is $I(T;\hat{Y})$ from the empirical confusion matrix. The bound is per emission, and Proposition~\ref{prop:memo} is what stops emissions from adding up. Within those cautions, a mediocre, calibrated model is a privacy feature, and Section~\ref{sec:predictable} suggests that is the model one will get.

\textbf{Parameters.}
\label{sec:params}
Noise is applied per axis, not to the product cell. At $\varepsilon = 3$, \krr{} over the $5 \times 10 \times 4 \times 20 = 4{,}000$ joint cells keeps the true cell 0.5\% of the time, while per-axis noise keeps seniority 83\%, function 69\%, size 87\% and industry 51\% of the time (Table~\ref{tab:params}). Under basic composition the joint record is released with $\varepsilon \approx 12$, which we label as weak formal privacy. The protections that carry the design are coarseness, org-conditional suppression, the absence of any identifier, memoisation, and the covert-channel bound. Randomised response is invertible in aggregate: a buyer observing an empirical frequency $y$ estimates the true frequency as $\hat{\pi} = (y - q_a)/(p_a - q_a)$, which is why the noise parameters ship as segment metadata (Section~\ref{sec:integration}). Topics' 5\% random topic over about 350 values was an effective $\varepsilon \approx 8.8$, stabilised per site and epoch, with a witness filter that leaked bits across sites~\cite{thomson2023topics}. \sys{} uses a stronger per-axis budget, memoises per value, and has no witness filter.

\section{Identity-Free Supply--Demand Integration}
\label{sec:integration}

\begin{figure}[t]
\centering
\begin{tikzpicture}[
  font=\scriptsize,
  head/.style={draw, rounded corners=1.5pt, fill=black!4, inner sep=3pt, minimum width=1.25cm, align=center},
  msg/.style={-{Stealth[length=1.6mm]}, semithick},
  self/.style={-{Stealth[length=1.6mm]}, semithick, loop},
  note/.style={font=\tiny, align=left, inner sep=1pt, fill=white},
  lbl/.style={font=\tiny, fill=white, inner sep=0.8pt}
]
\def\xa{0}\def\xb{1.85}\def\xc{3.7}\def\xd{5.55}\def\xe{7.4}
\foreach \x/\n/\t in {\xa/a/T0 page, \xb/b/T1 bridge, \xc/c/T2 worker, \xd/d/Vendor, \xe/e/SSP $\to$ DSP} {
  \node[head] (h\n) at (\x,0) {\t};
  \draw[black!40] (\x,-0.35) -- (\x,-6.1);
}
\draw[msg] (\xa,-0.7) -- node[above=0.6pt, lbl] {mount iframe (consent granted)} (\xb,-0.7);
\draw[msg] (\xb,-1.2) -- node[above=0.6pt, lbl] {GET /bridge, GET model (if not cached)} (\xd,-1.2);
\draw[msg] (\xd,-1.6) -- node[above=0.6pt, lbl] {document + $x_{\mathrm{ip}}$ inline, model bytes} (\xb,-1.6);
\node[note, anchor=west] at (\xb+0.05,-2.05) {read device signals; check memo table};
\draw[msg] (\xb,-2.5) -- node[above=0.6pt, lbl] {features, $x_{\mathrm{ip}}$, model (transfer)} (\xc,-2.5);
\node[note, anchor=west] at (\xc+0.05,-2.9) {compile Wasm, fuse, predict};
\draw[msg] (\xc,-3.3) -- node[above=0.6pt, lbl] {$\hat{y}$ per axis + confidence} (\xb,-3.3);
\node[note, anchor=west] at (\xb+0.05,-3.75) {suppress per org class; memoised \krr{}\\ keyed to \sharedid{}; validate schema};
\draw[msg] (\xb,-4.4) -- node[above=0.6pt, lbl] {noised label} (\xa,-4.4);
\node[note, anchor=west] at (\xa+0.05,-4.85) {\texttt{ortb2Fragments.global.user.data[]}};
\draw[msg] (\xa,-5.3) -- node[above=0.6pt, lbl] {bid request with cohort segments} (\xe,-5.3);
\draw[msg] (\xe,-5.8) -- node[above=0.6pt, lbl] {bid / creative} (\xa,-5.8);
\end{tikzpicture}
\caption{One auction. Network requests leave the browser only from T1 (to the vendor origin) and from T0 (the bid request). T2 receives everything by \texttt{postMessage} and answers the same way. On later auctions within the epoch, T1 serves the label from its memo table without running T2.}
\label{fig:sequence}
\end{figure}

\textbf{Supply side.}
\label{sec:sspside}
T0's \texttt{getBidRequestData} hook merges segments into Prebid's global first-party-data fragment~\cite{prebid-rtd}, which Prebid.js and Prebid Server forward as OpenRTB \texttt{user.data} objects~\cite{openrtb26} (Listing~\ref{lst:userdata}). Fig.~\ref{fig:sequence} shows the sequence within one auction. The taxonomy is registered as a vendor \texttt{segtax} identifier~\cite{iab-segtax}, because IAB's Audience Taxonomy 1.1 has only thirteen ``Employment Role'' nodes and no seniority, organisation size or function~\cite{iab-audience-taxonomy}. The \texttt{segclass} field carries the model version, so a buyer attaches a quality estimate to a model rather than to a vendor, and an \texttt{ext.noise} object states the mechanism and per-axis parameters, so a buyer can invert the randomisation in aggregate. Seller Defined Audiences carry neither provenance nor invertibility, which is part of why they were found non-compliant at scale~\cite{iab2023sdaguide}. Google Ad Manager accepts publisher-provided signals only in IAB taxonomies~\cite{gam-pps}, so cohorts are projected lossily onto Audience 1.1 identifiers for that path, and a DSP that has not implemented the taxonomy can take a cohort package through a curated deal identifier. \sys{} adds nothing to \texttt{user.eids}.

\begin{lstlisting}[float=tb,caption={Cohort segments as emitted into the OpenRTB request. Axis values follow LinkedIn's facet codes and $\bot$ axes are omitted. \texttt{segtax} \texttt{5xx} stands for a vendor-range taxonomy identifier, to be registered with the IAB Tech Lab.},label={lst:userdata}]
"user": { "data": [ {
  "name": "sif.vendor.example",
  "ext": { "segtax": 5xx, "segclass": "m3",
           "noise": { "mech": "krr", "eps": [3,3,3,3],
                      "c": [5,10,4,20] } },
  "segment": [ {"id":"s:6"}, {"id":"f:13"},
               {"id":"z:1001-5000"}, {"id":"i:4"} ]
} ] }
\end{lstlisting}

\textbf{Demand side.}
\label{sec:dspside}
Because the axes map one-to-one onto \texttt{urn:li:seniority}, \texttt{urn:li:function}, \texttt{urn:li:staffCountRange} and \texttt{urn:li:industry}~\cite{linkedin-facets}, a B2B DSP consumes the cohort as its campaign manager already expresses targeting. Attribution uses the identifier the DSP already has: with enhanced conversion tracking, LinkedIn appends a click identifier, \lifatid{}, to the landing-page URL, the advertiser's Insight Tag stores it as a first-party cookie for thirty days, and conversions carry it~\cite{linkedin-lifatid}. \sys{} adds one server-side record at click time, $(\lifatid{} \mapsto \text{bid-time cohort})$, and attributes conversions to cohorts, released with Laplace noise and the floors the platform already applies to audience reporting (300 members in total, 30 per segment~\cite{linkedin-insights}). \sharedid{} and \lifatid{} live on different sites under different partitions and are never joined.

\textbf{Cross-side calibration.}
\label{sec:calibration}
Running \sys{} a second time inside the advertiser's Insight Tag produces, at conversion, a label noised with the advertiser site's own identifier and secret, independent of the impression-side label by Proposition~\ref{prop:link}. Let $Q$ be the composite channel from true value to emitted value (model confusion followed by \krr{}) and $\boldsymbol{\pi}$ the population prior. The agreement matrix $A$ between impression-side and conversion-side labels satisfies $\mathbb{E}[A] = Q^{\top}\,\mathrm{diag}(\boldsymbol{\pi})\,Q$. With the \krr{} parameters known from the segment metadata, $\mathrm{tr}(A)$ lower-bounds the model's self-consistency without ground truth, and the full matrix identifies the Gram form of the confusion matrix: enough to score a source, not enough to recover it. A platform that owns a label source (member profiles) and a measurement surface (its tag on advertiser sites) measures \sys{}'s precision where it has truth and uses the agreement statistic everywhere else. A per-publisher quality score then converts SDA's self-attestation into a statistical audit with no per-user data, a cohort-level conversion rate becomes a bidding signal, and a cookieless successor to website-demographics reporting works on browsers where the platform's cookies are dead.

\section{Measurement and Analysis}
\label{sec:eval}

\subsection{CSP deployment against WebAssembly and third-party frames}
\label{sec:crawl}
Proposition~\ref{prop:iframe} says where \sys{} can run. This measurement says where it does. Existing CSP studies report directive prevalence but not the two quantities that decide deployment: whether a page-context WebAssembly instantiation would fail, and whether a vendor frame would be permitted. Our crawler (ancillary \texttt{csp\_crawl.py}) visits each site's front page in headless Chromium, records enforced policies from headers and \texttt{<meta>} elements, classifies each policy statically (script-governing directive, \texttt{'unsafe-eval'} or \texttt{'wasm-unsafe-eval'}, whether the \texttt{frame-src} $\to$ \texttt{child-src} $\to$ \texttt{default-src} chain admits a probe origin, and whether the \texttt{worker-src} chain does), and probes dynamically by instantiating a minimal WebAssembly module in the page and recording whether a \texttt{CompileError} results. Two populations were crawled on 12 September 2026 with Playwright 1.62 and Chromium 151: the Tranco top 10,000~\cite{lepochat2019tranco} (list 38KVL) and a curated list of 454 business, technology and news publishers where B2B inventory concentrates (ancillary \texttt{b2b\_publishers.txt}). An apex name without an address record was retried once under \texttt{www.}, which recovered 251 Tranco sites. 7,969 Tranco sites and 431 publishers were reachable. The remaining Tranco sites failed on name resolution (1,292), a 25-second timeout (491), or a TLS, HTTP or connection error.

Table~\ref{tab:crawl} gives the result. Page-context WebAssembly fails to compile on 3.7\% of Tranco sites and 4.9\% of the publishers, while the vendor frame is permitted on 88.8\% and 86.8\%. The two cells that decide deployment are small: \sys{} rescues 0.8\% of Tranco sites and 1.6\% of publishers (Wasm blocked, frame permitted) and is inert on 2.9\% and 3.2\% (both blocked), so the frame recovers roughly a fifth and a third, respectively, of the sites on which page-context Wasm fails. The static classifier and the dynamic probe disagree on 0.2\% of Tranco sites (12 of 7,969) and 0.5\% of publishers (2 of 431), and in every such case the page had navigated, reloaded or destroyed its execution context between the recorded response and the probe, or had disabled \texttt{eval} for injected code.

\begin{table}[t]
\centering\footnotesize\setlength{\tabcolsep}{4pt}
\begin{threeparttable}
\caption{CSP deployment relevant to \sys{}, as a percentage of reachable sites.}
\label{tab:crawl}
\begin{tabular}{@{}>{\raggedright\arraybackslash\hangindent=1em\hangafter=1}p{5.3cm}rr@{}}
\toprule
 & Tranco 10k & B2B/news \\
\midrule
Reachable sites ($n$) & 7,969 & 431 \\
Any enforced CSP & 33.7\% & 52.7\% \\
\texttt{script-src} or \texttt{default-src} present & 17.4\% & 24.4\% \\
\quad lacking \texttt{unsafe-eval} and \texttt{wasm-unsafe-eval} & 3.8\% & 5.3\% \\
Page-context Wasm blocked (dynamic probe) & 3.7\% & 4.9\% \\
Vendor frame permitted (\texttt{frame-src} chain) & 88.8\% & 86.8\% \\
Cross-origin module worker permitted & 85.9\% & 81.7\% \\
Wasm blocked, frame permitted (\sys{} rescues) & 0.8\% & 1.6\% \\
Wasm blocked, frame blocked (\sys{} inert) & 2.9\% & 3.2\% \\
\texttt{wasm-unsafe-eval} present & 1.1\% & 1.6\% \\
\bottomrule
\end{tabular}
\begin{tablenotes}\scriptsize
\item Web Almanac prior~\cite{almanac2024security,almanac2025security}: CSP on 19--22\% of pages, \texttt{script-src} in $\sim$17\% of policies, \texttt{unsafe-eval} in 77--78\% of those, and on the order of 1\% of pages blocking Wasm, with no figure for premium publishers or \texttt{wasm-unsafe-eval}. The publisher list is higher on every row.
\end{tablenotes}
\end{threeparttable}
\end{table}

\subsection{Budgets and privacy--utility in closed form}
\label{sec:analytic}
The budgets of Section~\ref{sec:budgets} are satisfiable with margin: a 100-tree gradient-boosted ensemble of depth 6 occupies roughly 75\,KB at 8-bit leaves, a two-layer perceptron with 256 hidden units is under 200\,KB, and both evaluate in well under a millisecond in single-threaded Wasm, whereas the ONNX Runtime Web binary exceeds 11\,MB~\cite{ort-web}, so \sys{} uses a minimal hand-written runtime for tree ensembles and dense layers. Table~\ref{tab:params} tabulates the mechanism's operating points. A buyer de-biasing cohort frequencies incurs variance inflated by $(p_a - q_a)^{-2}$, $1.6\times$ at $\varepsilon=3$ for five values and $3.2\times$ at $\varepsilon=2$, so cohort statistics remain usable at programmatic volumes while per-impression precision is reduced by exactly $1 - p_a$.

\begin{table}[t]
\centering\scriptsize\setlength{\tabcolsep}{2.4pt}
\caption{Operating points of the per-axis mechanism. $p$: probability of emitting the true value. cap.: covert-channel capacity per emission (bits). adv.: linkage advantage under a uniform prior.}
\label{tab:params}
\begin{tabular}{@{}lrrrrrrrrr@{}}
\toprule
 & \multicolumn{3}{c}{$\varepsilon=2$} & \multicolumn{3}{c}{$\varepsilon=3$} & \multicolumn{3}{c}{$\varepsilon=4$} \\
\cmidrule(lr){2-4}\cmidrule(lr){5-7}\cmidrule(lr){8-10}
Axis ($C$) & $p$ & cap. & adv. & $p$ & cap. & adv. & $p$ & cap. & adv. \\
\midrule
Size (4)      & .71 & .68 & .28 & .87 & 1.24 & .51 & .95 & 1.62 & .65 \\
Seniority (5) & .65 & .68 & .25 & .83 & 1.34 & .50 & .93 & 1.83 & .67 \\
Function (10) & .45 & .59 & .14 & .69 & 1.45 & .39 & .86 & 2.29 & .64 \\
Industry (20) & .28 & .41 & .06 & .51 & 1.26 & .23 & .74 & 2.40 & .50 \\
\bottomrule
\end{tabular}
\end{table}

\subsection{The label against the fingerprint, and inside the organisation}
\label{sec:linkage}
The observer receives, with every request, an IP address and a user agent whose combined entropy fingerprinting studies put at roughly 18 bits for desktop browsers~\cite{eckersley2010unique}, reduced but not eliminated by user-agent reduction and partitioning~\cite{gomezboix2018hiding,laperdrix2020survey}. A four-axis label adds at most $\log_2(5\cdot 10\cdot 4\cdot 20) \approx 12$ bits before noise and about 5 bits of channel capacity after it. Those bits are also correlated with what the IP already reveals when an organisation resolves. The relevant anonymity set is not the population but the employees behind one corporate address.

We quantify that case in closed form (ancillary \texttt{reid\_sim.py}). A person with true values $(s, f)$ in an organisation of $n$ employees is unique if no coworker would emit the same observable. With memoised, independent coins this probability is $\sum_w \Pr[w \mid s, f]\,(1 - m(w))^{n-1}$, where $m(w)$ is the probability that a random coworker emits $w$. We draw 150,000 employees with organisation sizes following the U.S. employment distribution by enterprise size (16.2\% in firms under 20 employees, 16.2\% in 20--99, 13.5\% in 100--499, 54.1\% in 500 or more~\cite{susb2022}), log-uniform sizes within each class, and assumed within-organisation priors for seniority ($0.75, 0.13, 0.07, 0.03, 0.02$) and function (a skewed ten-group prior). Results are insensitive to flattening either prior.

\begin{figure}[t]
\centering
\includegraphics[width=\columnwidth]{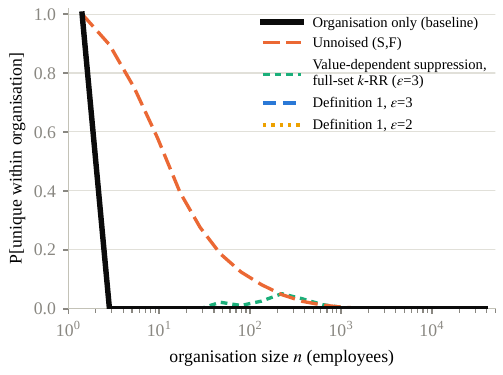}
\caption{Probability that an employee is unique within a resolved organisation, by organisation size. Unnoised seniority$\times$function labels single out most employees of small firms. Value-dependent suppression followed by randomised response over the full value set leaves a bump in mid-sized firms, because flipped coins land on values nobody else emits. Definition~\ref{def:suppression} lies on the organisation-only baseline at both budgets.}
\label{fig:reid}
\end{figure}

Fig.~\ref{fig:reid} gives the result. With the organisation known and no label, 3.7\% of employees are unique: those in single-person firms. Emitting seniority and function in the clear makes 16.3\% unique overall and 75.8\% in firms under 20 employees, and puts 37.8\% of employees in anonymity sets smaller than ten. Suppressing by the person's own cell followed by \krr{} over the full value set yields 4.5\% unique, with a bump of 3.6\% in firms of 100--499: a flipped coin lands on a value nobody else in the organisation emits, and the noise manufactures unique records. Definition~\ref{def:suppression} returns exactly the baseline, 3.7\% unique at both $\varepsilon = 3$ and $\varepsilon = 2$ and 12.3\% in sets smaller than ten, the same as for the organisation alone, with zero excess uniqueness in every size class above single-person firms. The label adds no singling-out capability to what the IP already gives an observer, by construction (Proposition~\ref{prop:floor}) rather than by tuning. The simulation forces a distinction: randomised response protects the attribute (Proposition~\ref{prop:ldp}) but does not by itself protect against singling out, and can worsen it. Suppression does, and its decision must be independent of the person it protects. A full evaluation against real traffic in the framework of Carey et al.~\cite{carey2023reid}, where the observer also holds the fingerprint bits, is future work.

\subsection{What is predictable, and from what}
\label{sec:predictable}
The published evidence per attribute is consistent. A named organisation is inferable deterministically for 10--25\% of U.S. B2B human sessions, only on corporate egress, and vendor rates of 30--65\% require cookies or identity graphs~\cite{demandbase2020wfh,leadfeeder-anon,warmly-matchrates}. Organisation type is inferable for essentially all traffic from AS type and IP-to-company data, with high accuracy for hosting, education and government and ambiguity between business and ISP for small static lines~\cite{ziv2021asdb,ipinfo-company}. Industry follows from the resolved organisation at about 88\% on 13 classes~\cite{jagric2024industry}, and a size-class AUC of about 0.7 is plausible from allocation type alone. For job function there is no passive-signal evidence, only the text-based 52.7\% on nine occupational groups~\cite{preotiuc2015occupation,khaokaew2024workr}. For seniority there is no published evidence at all, and management is 7.2\% of U.S. employment~\cite{bls2025oes}. The strongest device-signal analogue, an Apple device predicting income quartile at 69\% on a balanced binary task~\cite{bertrand2018coming}, is a 1.4$\times$ lift over chance from one binary signal. Our prior for the person head is therefore an AUC of 0.55--0.65 for ``manager or above versus individual contributor'' when no organisation resolves, rising with the organisation's size prior when one does, and chance-level at finer grain. A negative result would be a privacy-positive finding: the device does not leak seniority. No public dataset pairs passive web signals with ground-truth role, so the person head will be evaluated in a consented panel (target $n = 2{,}000$--$5{,}000$). The panel will capture the signals of Section~\ref{sec:signals} together with self-reported seniority, function, organisation size, industry and current network. It will report accuracy, AUC, calibration and $I(T;\hat{Y})$ per axis by signal group and egress class, and cohort assignment rates by age, gender and race proxies. Results will appear in a revision of this preprint.

\section{Legal and Ethical Analysis}
\label{sec:legal}
Local computation does not escape the consent requirement of Article~5(3) of the ePrivacy Directive. The EDPB's Guidelines~2/2023 bring information ``produced locally'' within scope once ``this information or any derivation of this information'' is made available to a third party, including JavaScript-driven collection, protocol headers and IP addresses (\P\P54--56)~\cite{edpb2023guidelines}, and the UK's amended regulation~6 of PECR, in force from February 2026, covers ``information automatically emitted by the terminal equipment''~\cite{pecr-reg6,ico2026storage}. A noised cohort label is such a derivation, so \sys{} is consent-gated: T0 runs only with the Transparency and Consent Framework signals for Purpose~1, Purposes~3 and~4, Purpose~7 and Special Feature~2 where actively requested client hints feed the model~\cite{tcf-policies}, and treats Global Privacy Control as an opt-out, which the CCPA and the \emph{Sephora} action require for a cohort communicated to an ad network~\cite{ccpa-140,sephora2022,cppa2025}. What on-device buys is not exemption but minimisation and verifiability: the consented processing is confined to a context whose outputs are bounded (Propositions~\ref{prop:covert}--\ref{prop:advantage}) and whose networked code is pinned. Inferring seniority or function is profiling under GDPR Article~4(4)~\cite{gdpr}. Ordinary ad selection does not trigger Article~22~\cite{wp251}, but under \emph{SCHUFA} a probability value can itself be a decision when a third party draws strongly on it~\cite{cjeu-schufa}, and seniority correlates with age, so a cohort trained on device price tier, operating-system age and geolocation is a proxy model even with no protected attribute as input~\cite{ali2019discrimination,imana2021auditing}. Following the 2019 Facebook and 2022 Meta settlements~\cite{fb2019settlement,doj2022meta}, \sys{} marks the seniority and function axes as ineligible for housing, employment and credit campaigns in the segment metadata, keeps every axis coarse enough that no cell approximates a protected class, and will report cohort assignment rates by demographic proxies in the panel study. The self-describing segment is the parameter disclosure the Digital Services Act requires~\cite{dsa}. The panel study is consented and reports only aggregates, and the crawl reads public front pages and stores no page content.

\section{Limitations}
\label{sec:limits}

\sys{} does not hide the IP address from the servers it contacts, because no web system can. It confines what is derived from it, computes the derivation once and coarsely, and forbids it from naming an organisation. The signal is also eroding: iCloud Private Relay hides the address for its opt-in users~\cite{sattler2022relay}, SASE egress hides the enterprise~\cite{zscaler-transactions}, and carrier-grade NAT hides everyone on mobile~\cite{richter2016cgn}, so \sys{} degrades to organisation type where these apply. The person head may be weak: if the panel study returns chance-level results, the system reduces to organisation-level cohorts with an explicit confidence, which is still more than the identity-free market has today, and the privacy analysis becomes stronger. Formal privacy is weak by design, since $\varepsilon \approx 12$ across four axes is not a strong guarantee. The guarantees that carry the design are structural (no identifier, confinement, suppression, memoisation, the covert-channel bound), each stated as a proposition with an explicit adversary, and a deployment that wants stronger formal privacy trades utility along Table~\ref{tab:params}. A bot that spoofs device features can enter a high-value cohort, and a publisher can inject segments it did not compute. The server-side IP context is the harder half to spoof, and the cross-side agreement statistic is the buyer's audit against inflation, but neither defends against residential-proxy traffic, which invalid-traffic detection must handle upstream. A browser or extension that blocks the vendor's frame disables \sys{}, and we regard that as correct: user agency belongs in the user agent. The feature vector, storage lifetimes and worker policy enforcement differ by engine, and the model will likely be weaker on the browsers where identity matching is already dead, which is where it is needed. We did not attempt mobile applications and connected TV, where no iframe exists, a pricing model for cohorts, or jurisdictions beyond the EU, UK and United States.

\section{Conclusion}
\label{sec:conclusion}
The browser-native path to privacy-preserving audience cohorts closed in 2025. What remains is ordinary web content, and this paper has argued that ordinary web content is enough. A navigated cross-origin iframe gives a vendor a policy it controls on any publisher, a worker served with \texttt{default-src 'none'} gives the model no network, and a small pinned bridge between them applies a memoised randomised response whose coin rotates with the publisher's own identifier. The result carries no identity, adds no linkability beyond the identifier already present, leaks at most a few bits per site per week even if the model turns hostile, and suppresses where reverse-IP vendors are most confident. Whether the device leaks seniority is a question we have declined to answer by assertion. The evidence says it mostly does not, and for this purpose a weak classifier means strong privacy.

\raggedbottom
\section*{Availability}
The ancillary files contain the full proofs of Propositions~\ref{prop:iframe}--\ref{prop:advantage} (\texttt{proofs.md}), the CSP/WebAssembly crawler of Section~\ref{sec:crawl} with its curated publisher list and the per-site crawl results behind Table~\ref{tab:crawl}, the re-identification simulation, a reference implementation of the \krr{} mechanism with memoisation and aggregate de-biasing, and the response headers and skeleton code for the three \sys{} tiers.

\section*{Acknowledgment}
Large-language-model assistants were used to draft and edit text, to write the crawler, simulation and analysis code, and to check references. The authors reviewed all of the content and take full responsibility for it.

\balance
\bibliographystyle{IEEEtranN}
\bibliography{refs}
\end{document}